%% file: main.tex
\documentclass[11pt]{article}
\input{preamble.tex}

\DeclareMathOperator{\Unif}{Unif}
\DeclareMathOperator{\normal}{normal}
\DeclareMathOperator{\distinct}{distinct}
\DeclareMathOperator{\DB}{DB}
\newcommand{\DBnormal}{\DB_{\normal}}
\newcommand{\DBdistinct}{\DB_{\distinct}}
\newcommand{\pnormal}{p_{\normal}}
\newcommand{\pdistinct}{p_{\distinct}}

\newenvironment{myabstract}%
{\list{}{\listparindent 1.5em
        \itemindent    \listparindent
        \leftmargin    1cm
        \rightmargin   1cm
        \parsep        0pt}%
    \item\relax}%
{\endlist}

\newenvironment{mycover}%
{\list{}{\listparindent 0pt
        \itemindent    \listparindent
        \leftmargin    1cm
        \rightmargin   1cm
        \parsep        0pt}%
    \raggedright
    \item\relax}%
{\endlist}

\newcommand{\myemail}[1]{\,$\cdot$\, {\small #1}}
\newcommand{\myaff}[1]{\,$\cdot$\, {\small #1}\par\medskip}

\begin{document}

\begin{mycover}
{\huge\bfseries 2-Coloring Cycles in One Round \par}

\bigskip
\bigskip

\textbf{Maxime Flin}
\myemail{maxime.flin@aalto.fi} \myaff{Aalto University, Finland}

\textbf{Alesya Raevskaya}
\myemail{alesya.raevskaya@aalto.fi} \myaff{Aalto University, Finland}

\textbf{Ronja Stimpert}
\myemail{ronja.stimpert@aalto.fi} \myaff{Aalto University, Finland}

\textbf{Jukka Suomela}
\myemail{jukka.suomela@aalto.fi} \myaff{Aalto University, Finland}

\textbf{Qingxin Yang}
\myemail{qing.yang@helsinki.fi} \myaff{Aalto University, Finland}

\bigskip
\end{mycover}

\begin{myabstract}
\fakeparagraph{Abstract.}
We show that there is a one-round randomized distributed algorithm that can 2-color cycles such that the expected fraction of monochromatic edges is less than 0.24118. We also show that a one-round algorithm cannot achieve a fraction less than 0.23879. Before this work, the best upper and lower bounds were 0.25 and 0.2. Our proof was largely discovered and developed by large language models, and both the upper and lower bounds have been formalized in Lean 4.
\end{myabstract}

\thispagestyle{empty}
\setcounter{page}{0}
\clearpage

\section{Introduction}
We study a problem that appears so simple and elementary that it is difficult to believe it remains an open question in 2026. Yet, this seems to be the case.

We have a directed cycle formed by $n$ identical computers, and we seek a randomized distributed algorithm that $2$-colors the cycle in $1$ round.
Obviously, we cannot find a proper $2$-coloring (it does not even exist in odd cycles), so the goal is to \emph{minimize the fraction of monochromatic edges}. Put otherwise, we want to find a \emph{cut that is as large as possible}.

Let $p(A)$ denote the expected fraction of monochromatic edges when we use a $1$-round algorithm $A$, and let $p^*$ be the infimum of these values over all algorithms (see \cref{sec:def} for the detailed definition). What can we say about $p^*$? Prior work \cite{hirvonen-rybicki-etal-2017-large-cuts-with-local} together with a trivial lower bound implies
$
    1/5 \le p^* \le 1/4
$.
But what is the precise value? In this work we develop new techniques for answering this question and show that
\begin{equation}\label{eq:pstar-new}
    0.23879 \le p^* < 0.24118.
\end{equation}
This improves the ratio between lower and upper bounds from $0.80$ to approx.\ $0.99$.

\subsection{Motivation and broader context: distributed quantum advantage}

While this problem is of independent interest---see e.g.\ \cite{hirvonen-rybicki-etal-2017-large-cuts-with-local,shearer-1992-a-note-on-bipartite-subgraphs-of-triangle}---our original motivation for studying it arises from the pressing open question of understanding distributed quantum advantage. One of the biggest open questions in this area boils down to whether 3-coloring cycles (or any local symmetry-breaking problem) admits a distributed quantum advantage, see e.g.\ \cite{akbari-coiteux-roy-etal-2025-online-locality-meets,balliu-brandt-etal-2025-distributed-quantum-advantage,d-amore-2025-on-the-limits-of-distributed-quantum,le-gall-rosmanis-2022-non-trivial-lower-bound-for-3,gavoille-kosowski-markiewicz-2009-what-can-be-observed}. Currently, we cannot even exclude the possibility that quantum-LOCAL can 3-color cycles in 1 round, with probability 1.

Hence, as a first step towards understanding such questions, we started to explore the limitations of problems related to coloring in quantum-LOCAL in 1 round. Clearly, quantum-LOCAL cannot properly 2-color in 1 round, but could it maybe find a 2-coloring with fewer monochromatic edges?
At this point we finally realized: if we design a quantum algorithm, what do we compare it with? For example, if we achieved $p = 0.245$ with quantum algorithms, is it better than the classical limits? Apparently the classical limits for this seemingly simple problem were wide open before this work.

\subsection{Coworkers: large language models and proof assistants}

Almost all research work reported here was done with large language models, primarily with GPT-5.2 in Codex. They not only handled the technical details of the proofs, but also were essential in the discovery of the high-level proof technique that we showcase here.

As it is hard to verify nontrivial LLM-generated mathematical proofs, we also had LLMs formalize both the upper bound and the lower bound in Lean 4 proof assistant, so that the proof can be automatically verified with a computer; see \cite{2-coloring-1-round-lean-nonanon} for the full Lean 4 code.

Hence we believe this work also serves as a tangible demonstration that modern generative AI tools have reached a point where (1)~research-level problems in the theory of distributed computing can be tackled with LLMs, and (2)~automatic formalization of such results is also feasible in practice.

\section{Precise problem definition}\label{sec:def}

As we consider randomized $1$-round algorithms in directed cycles, a distributed algorithm is, in essence, a function that maps the local random bits of $3$ consecutive nodes into the final output. To make the setting as general as possible, we do not limit the number of random bits. This can be conveniently modeled so that each node picks a random real number uniformly from $[0,1]$; note that such a real number can be interpreted as an infinite sequence of random bits. Hence, an algorithm is simply a function
\begin{equation}
    f\colon [0,1]^3\to\{0,1\},
\end{equation}
where $f(a,b,c)$ is the output of a node whose predecessor has random value $a$, its own random value is $b$, and its successor has random value $c$.
To make probabilities well-defined, we require that $f$ is measurable.

\begin{remark*}
One can fully define such an algorithm by specifying which coordinates $(x,y,z)$ in the unit cube satisfy $f(x,y,z) = 1$. Especially if it produces an output that is not scattered over the domain it makes sense to visualize this region as exemplified in \cref{fig:one-param,fig:upper-bound}.
\end{remark*}

With this definition of a $1$-round algorithm, let us now define what is the expected fraction $p(f)$ of monochromatic edges when we apply it to an $n$-cycle, for any $n \ge 4$. Everything is symmetric, so we can consider any edge $e = (u,v)$ in the cycle and calculate the probability $p(f)$ that this edge is monochromatic. This only depends on four random numbers, $A,B,C,D$, that correspond to the random numbers selected by the predecessor of $u$, node $u$, node $v$, and the successor of $v$, in this order. The output of $u$ is $f(A,B,C)$ and the output of $v$ is $f(B,C,D)$. Hence
\begin{equation}\label{eq:p(f)}
    p(f) = \Pr\bigl[f(A,B,C)=f(B,C,D)\bigr],
\end{equation}
where $A,B,C,D$ are i.i.d.\ $\Unif[0,1]$.

\begin{remark*}
Notably, \eqref{eq:p(f)} does not depend on the length of the cycle $n$, as long as $n \ge 4$ (and hence the predecessor of $u$ is independent of the successor of $v$). In particular, the definition of $p(f)$ does not change if we assumed, say, that $n > 1000$ or that $n$ is even.
\end{remark*}

Now $p(f)$ indicates how good a given algorithm $f$ is. As we are interested in the \emph{best possible} algorithm, let us finally define
$
    p^* = \inf_f p(f)
$.
What can we say about the precise value of~$p^*$?

\section{Simple observations and prior work}

To gain some intuition on this problem, let us consider three simple algorithms and one simple lower bound.
First, we can simply produce a random coloring with 
$
    f_1(a,b,c) = 1
$
if $b \ge 1/2$,
and it is easy to see that this results in
$
    p(f_1) = 1/2
$.

Second, we can try to output an independent set. If we succeed in doing this, we will never have monochromatic edges of the form $(1,1)$, only monochromatic edges of the form $(0,0)$, and if our independent set is large, such edges are rare. Here is a simple algorithm that finds an independent set, using the idea that local maxima join:
$
    f_2(a,b,c) = 1 \text{ if } b > a \text{ and } b > c
$.
Analyzing this is an easy exercise:
$
    p(f_2) = 1/3
$.

Third, we can use the idea familiar from e.g.\ \cite{hirvonen-rybicki-etal-2017-large-cuts-with-local}: find a random cut and use 1 round to ``improve'' it. The obvious rule for improvement is this: if I am on the same side as my predecessor and successor, I will change sides. This rule $f_3$ results in 
$
    p(f_3) = 1/4
$,
which is the tightest upper bound we have seen in prior work.

Let us finally give a simple lower bound result. Let $f$ be any algorithm. We can construct a $5$-cycle with $5$ i.i.d.\ random numbers, and apply $f$ to each consecutive $3$-tuple of these numbers. There has to be at least one monochromatic edge in the output, and we obtain
$
    p(f) \ge 1/5
$.
Note that we cannot do the same trick with $3$-cycles, as it would violate the assumption that $A,B,C,D$ are independent. We could do the same trick with $7$-cycles, but the bound would be weaker.

\begin{remark*}
Even if the algorithm $f$ is only promised to work in, say, even cycles of length at least one million, we can still do the ``what if'' experiment and see what \emph{would} happen if we broke the rules and applied it in a $5$-cycle. Function $f$ cannot tell the difference between $5$-cycles and valid inputs, and it has to produce \emph{some} labeling of the $5$-cycle. But if $p(f) < 1/5$, it would have to output on average a $2$-coloring that does not exist in $5$-cycles.
\end{remark*}

We draw attention to the fact that all these bounds are seemingly ad-hoc and unrelated to each other, not following any systematic design strategy, and they do not present any clear path towards proving tight bounds.

\section{Contribution and technical overview}

Our main contribution is this (see \cref{sec:def} for definitions):
\begin{theorem}
    $0.23879 \le p^* < 0.24118$.
\end{theorem}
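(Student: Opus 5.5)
The plan has two independent halves: an explicit algorithm for the upper bound, and a relaxation argument for the lower bound that generalizes the $5$-cycle trick of the previous section.

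\textbf{Upper bound.} We will exhibit an explicit measurable $f\colon[0,1]^3\to\{0,1\}$ with $p(f)<0.24118$. First we reduce the search space using symmetry. Since only the relative order and the thresholds of the values matter, we consider functions that are unions of axis-aligned boxes on a grid of breakpoints $0=t_0<t_1<\dots<t_k=1$. Each cell is labeled $0$ or $1$, and the breakpoints are free real parameters. For such $f$, the quantity $p(f)=\Pr[f(A,B,C)=f(B,C,D)]$ is a finite sum over $4$-tuples of cells of products of interval lengths. It is therefore a polynomial in the breakpoint gaps. The plan is as follows.
\begin{enumerate}
\item Start from $f_3$, which is ``random cut, then flip if both neighbours agree with me'', with its threshold $1/2$, and show that perturbing it strictly helps. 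A natural one-parameter family is to flip only when $b$ is in an extreme range, or to bias the initial cut toward the middle value.
\item Optimize numerically over the labels and breakpoints.
\item Once a good configuration is found, replace its parameters by nearby rationals and compute $p(f)$ exactly as a rational number, which certifies $p(f)<0.24118$.
\end{enumerate}
This part is routine once the right configuration is found; the difficulty is purely the search.

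\textbf{Lower bound.} Here we generalize the $5$-cycle argument. Instead of a single odd cycle, we consider a finite family of ``what if'' experiments, namely short cycles and paths built from i.i.d.\ values. We then combine the resulting constraints in a linear program.
\begin{enumerate}
\item Discretize: we claim that w.l.o.g.\ we may restrict to analyzing the joint law of outputs on windows. Concretely, for a window of $m$ consecutive nodes on a long cycle (values $X_1,\dots,X_{m+2}$), let $q_m$ be the distribution of the output string in $\{0,1\}^m$.
\item These distributions satisfy linear constraints:
\begin{itemize}
\item they are nonnegative and sum to one;
\item they are marginal-consistent, with each $q_m$ a marginal of $q_{m+1}$;
\item they are shift-invariant;
\item \textbf{independence constraints}: outputs at distance $\ge3$ depend on disjoint sets of variables, so the corresponding marginals factor. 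This is quadratic, but can be used via fixed $\{0,1\}$ patterns or relaxed.
\end{itemize}
\item The key extra ingredient is the odd-cycle gluing. For an odd cycle of length $\ell\in\{5,7,9,\dots\}$, the output distribution on that cycle agrees with the window distributions on every arc of length $\le \ell-2$, because such an arc sees only i.i.d.\ values exactly as in the long cycle. Yet every outcome on the cycle has at least one monochromatic edge.
\end{enumerate}
We then set up an LP, or an SDP-like relaxation if the factorization constraints are needed, whose variables are the probabilities of output patterns on the cycle of length $5$ (and $7$), together with windows of a long cycle. The LP minimizes the edge-monochromatic probability $p$, and its optimum gives a certified lower bound. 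We will extract a dual certificate with rational coefficients, so the bound $0.23879$ reduces to checking a finite inequality.

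\textbf{Main obstacle.} The hard step is making the lower-bound relaxation strong enough. The pure $5$-cycle constraint yields only $1/5$, so the improvement must come from coupling several cycle experiments through shared marginals and from exploiting independence of far-apart nodes. I would first try combining the $5$-cycle with the long path on windows of length up to about $5$. For the independence constraints, I would add them in the linearized form $\Pr[x_i=a,x_j=b]=\Pr[x_i=a]\Pr[x_j=b]$ after fixing the single-site marginal $\Pr[x_i=1]=r$ as a parameter. I would then sweep $r\in[0,1]$ and take the minimum of the LP values over a fine grid, with a Lipschitz bound covering the gaps between grid points.
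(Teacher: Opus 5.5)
\textbf{There is a genuine gap in the lower bound: the relaxation you describe provably cannot certify $0.23879$.} It admits a feasible point at $r=1/2$ with $p=3/14$.

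Describe a coloring by its set of monochromatic edges, and apply the global color flip with probability $1/2$. On the $7$-cycle with edges $e_0,\dots,e_6$, let the monochromatic set be:
\begin{itemize}
\item a single uniformly random edge, with probability $3/4$;
\item a uniformly random rotation of $\{e_0,e_1,e_4\}$, with probability $1/10$;
\item a uniformly random rotation of $\{e_0,e_2,e_4\}$, with probability $3/20$.
\end{itemize}
Let $q_5$ be the law of five consecutive outputs on this $7$-cycle, and let $q_m$ for $m<5$ be its marginals. On the $5$-cycle, use one random monochromatic edge with probability $27/28$ and three consecutive ones with probability $1/28$.

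Counting over rotations gives the following.
\begin{itemize}
\item An edge is monochromatic with probability $3/14$, and two consecutive edges are both monochromatic with probability $1/70$. This holds both in $q_3$ and on the $5$-cycle. By flip and reflection symmetry these two numbers determine the law of three consecutive outputs, so your $3$-arc gluing holds.
\item The $7$-cycle gluing holds by construction.
\item Among $3$ (resp.\ $4$) consecutive edges, the number of monochromatic ones is odd with probability exactly $1/2$. Moreover, the parity of $3$ consecutive edges is independent of the next edge.
\end{itemize}
Hence $x_i\perp x_{i+3}$, $x_i\perp x_{i+4}$, $x_i\perp(x_{i+3},x_{i+4})$, and, by reflection symmetry, $(x_i,x_{i+1})\perp x_{i+4}$. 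So every constraint you list holds, including exact rather than merely linearized factorization: windows up to length $5$, cycles of length $5$ and $7$, and independence. Yet $p=3/14\approx 0.214$, so your sweep over $r$ returns at most $3/14$.

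\textbf{Why this happens, and what the paper does instead.} Larger windows and longer cycles will cut off this particular point. However, your constraints only involve output marginals and finite dependence, and nothing in your plan ties the limit of the relaxation to $p^*$; there is no analogue of $\lim_n\pdistinct(n)=p^*$. The missing idea is the one in Lemma~\ref{lem:lower}. Condition on a single realization $X_1,\dots,X_n$. Then $(a,b,c)\mapsto f(X_a,X_b,X_c)$ is \emph{one} $2$-coloring of all of $\DBdistinct(n)$. In it, each value is reused by many overlapping triples, and all the odd cycles are coupled configuration by configuration, not only through their marginals. This gives $p(f)\ge\pdistinct(n)$ for every $n\ge 4$, and the bound is asymptotically tight. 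The constant $0.23879$ then comes from a symmetry-reduced max-cut SDP with triangle inequalities on $\DBdistinct(10^6)$, certified by a dual solution.

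\textbf{Upper bound.} Your plan is the right kind of argument: a union of boxes with free breakpoints is a weighted version of Lemma~\ref{lem:upper}. As written, though, it is only a search procedure. No $f$ is exhibited or evaluated, and the uniform grid with $n=4$ still gives about $0.2422$. The paper instead supplies an explicit monotone three-parameter piecewise-linear region with $p(f)<0.24118$.
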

To prove this, we introduce a systematic approach for deriving close-to-tight upper and lower bounds on $p^*$, by studying two variants of $3$-dimensional De Bruijn graphs.

\subsection{Normal vs.\ distinct De Bruijn graphs}

Let $[n] = \{0,1,\dotsc,n-1\}$.
For all $n \ge 2$, we define the usual $3$-dimensional De Bruijn graph of $n$ symbols, in brief the \emph{normal De Bruijn graph} $\DBnormal(n) = (V_{\normal}(n), E_{\normal}(n))$, as follows:
\begin{align*}
    V_{\normal}(n) &= \bigl\{ (a,b,c) : a,b,c \in [n] \bigr\}, \\
    E_{\normal}(n) &= \bigl\{ \bigl( (a,b,c), (b,c,d) \bigr) : a,b,c,d \in [n] \bigr\}.
\end{align*}
And then for all $n \ge 4$ we define a subgraph of $\DBnormal$ that we call the \emph{distinct De Bruijn graph} $\DBdistinct(n) = (V_{\distinct}(n), E_{\distinct}(n))$, as follows:
\begin{align*}
    V_{\distinct}(n) &= \bigl\{ (a,b,c) : a,b,c \in [n],\ a,b,c\text{ are distinct} \bigr\}, \\
    E_{\distinct}(n) &= \bigl\{ \bigl( (a,b,c), (b,c,d) \bigr) : a,b,c,d \in [n],\ a,b,c,d\text{ are distinct} \bigr\}.
\end{align*}

\begin{figure}[p]
    \centering
    \includegraphics[page=4,width=.6\linewidth]{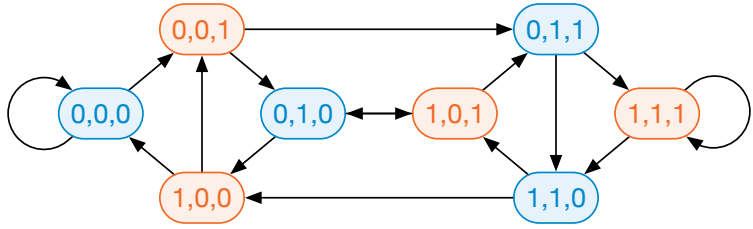}
    \caption{The normal De Bruijn graph $\DBnormal(n)$ for $n=2$, together with a $2$-coloring (blue--orange) where only $4$ out of $16$ edges are monochromatic. It is also easy to see that there is no $2$-coloring with fewer monochromatic edges (consider two self-loops and two triangles). Hence $p^* \le \pnormal(2) = 1/4$.}\label{fig:db-normal-2}
\end{figure}
    
\begin{figure}[p]
    \centering
    \includegraphics[page=5,width=.65\linewidth]{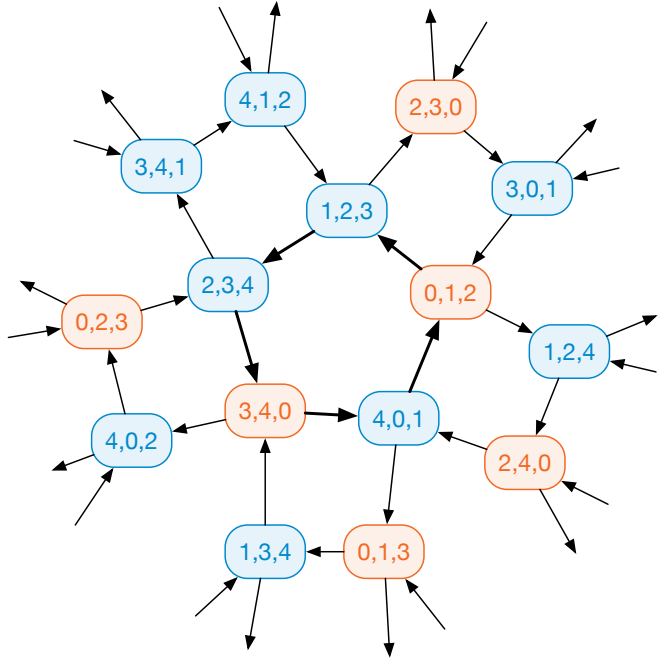}
    \caption{Selected parts of the distinct De Bruijn graph $\DBdistinct(n)$ for $n=5$. The graph has a total of $60$ nodes and $120$ edges. The edges can be partitioned into $24$ cycles of length $5$: for each edge $\bigl((a,b,c),(b,c,d)\bigr)$ there is a unique element $e \in [5] \setminus \{a,b,c,d\}$, and we assign this edge to the $5$-cycle formed by $(a,b,c)$, $(b,c,d)$, $(c,d,e)$, $(d,e,a)$, and $(e,a,b)$. In each such $5$-cycle, at least one edge is monochromatic; therefore, there is no $2$-coloring with fewer than $24$ monochromatic edges. On the other hand, exactly $24$ monochromatic edges can be achieved, e.g., by coloring nodes with labels $(0,y,z)$ and $(x,y,0)$ orange and all other nodes blue, as shown in the figure. Hence $p^* \ge \pdistinct(5) = 1/5$.}\label{fig:db-distinct-5}
\end{figure}

\subsection{Sandwiching}

We write $\pnormal(n)$ for the smallest possible fraction of monochromatic edges in any $2$-coloring of $\DBnormal(n)$, and similarly $\pdistinct(n)$ for the smallest possible fraction of monochromatic edges in $\DBdistinct(n)$; see \cref{fig:db-normal-2,fig:db-distinct-5} for examples. Our main observation is that (1)~we can sandwich $p^*$ always between $\pnormal(n)$ and $\pdistinct(n)$, using any value of $n$, and (2)~this technique is complete in the sense that $\pnormal(n)$ and $\pdistinct(n)$ tend to $p^*$ as $n$ increases. Hence we can replace the seemingly open-ended question of studying all kinds of $1$-round algorithms over an uncountably infinite domain with a very down-to-earth question of studying cuts in De Bruijn graphs.

\begin{lemma}
    \label{lem:upper}
    For all $n \ge 2$, we have $p^* \le \pnormal(n)$.
\end{lemma}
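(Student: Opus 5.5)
Given an optimal (or any) $2$-coloring $\chi\colon V_{\normal}(n)\to\{0,1\}$ of $\DBnormal(n)$, we turn it into a $1$-round algorithm by discretizing the random reals. Let $q\colon[0,1]\to[n]$ be $q(x)=\min(\lfloor nx\rfloor,n-1)$, so that for $X\sim\Unif[0,1]$ the value $q(X)$ is uniform on $[n]$. Define
\[
f(a,b,c)=\chi\bigl(q(a),q(b),q(c)\bigr).
\]
This $f$ is measurable, since it is constant on each of finitely many boxes that partition the cube. Thus $f$ is a valid algorithm, and $p^*\le p(f)$.

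We then compute $p(f)$ using \eqref{eq:p(f)}. Put $a'=q(A)$, $b'=q(B)$, $c'=q(C)$, $d'=q(D)$. These are i.i.d.\ uniform on $[n]$. By construction, $f(A,B,C)=f(B,C,D)$ holds exactly when $\chi(a',b',c')=\chi(b',c',d')$, that is, when the edge $\bigl((a',b',c'),(b',c',d')\bigr)$ of $\DBnormal(n)$ is monochromatic under $\chi$.

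The map $(a',b',c',d')\mapsto\bigl((a',b',c'),(b',c',d')\bigr)$ is a bijection from $[n]^4$ onto $E_{\normal}(n)$. Here $E_{\normal}(n)$ is treated as a multiset indexed by quadruples, which is how the $n^4$ edges, self-loops included, are counted in the definition of $\pnormal(n)$. Since the quadruple is uniform on $[n]^4$, the edge is uniform on $E_{\normal}(n)$. Hence $p(f)$ equals the fraction of monochromatic edges under $\chi$. Choosing $\chi$ to minimize this fraction, which is possible because there are finitely many colorings, gives $p(f)=\pnormal(n)$, and so $p^*\le\pnormal(n)$.

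None of these steps is hard. The only point to watch is the edge-counting convention: each quadruple $(a,b,c,d)$ must count as a distinct edge, including loops such as $\bigl((0,0,0),(0,0,0)\bigr)$, so that the uniform measure on quadruples matches the fraction used in $\pnormal(n)$. This is consistent with \cref{fig:db-normal-2}, which counts $16=2^4$ edges.
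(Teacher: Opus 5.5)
Your proof is correct and is essentially the paper's argument: you use the same discretization $q$ (the paper's $g$, with $g(1)=n-1$), compose an optimal coloring with it, and identify a uniform quadruple with a uniform edge of $\DBnormal(n)$. Your multiset caveat is harmless but unnecessary: $(a,b,c,d)\mapsto\bigl((a,b,c),(b,c,d)\bigr)$ is already injective, so $E_{\normal}(n)$ as a plain set has exactly $n^4$ elements, loops included.
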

\begin{proof}
    Assume that $h\colon V_{\normal} \to \{0,1\}$ is a $2$-coloring of $\DBnormal(n)$ with fraction $\pnormal(n)$ monochromatic edges. Then define the discretization function $g\colon [0,1] \to [n]$ in the natural way, $g(1) = n-1$ and otherwise $g(x) = \lfloor xn \rfloor$. Define a function
    \[
        f(a,b,c) = h\bigl(g(a), g(b), g(c)\bigr).
    \]
    Now $f$ is a $1$-round $2$-coloring algorithm with $p(f) = \pnormal(n)$. To see this, note that the following processes are equivalent:
    \begin{enumerate}
        \item Pick a uniformly random edge $e = \bigl((A,B,C), (B,C,D)\bigr)$ from $\DBnormal(n)$ and see whether it is monochromatic.
        \item Pick a uniformly random tuple $(A',B',C',D')$ and discretize it to $A=g(A'),B=g(B'),C=g(C'),D=g(D')$ and see whether $h(A,B,C) = h(B,C,D)$. \qedhere
    \end{enumerate}
\end{proof}

\begin{lemma}
    \label{lem:lower}
    For all $n \ge 4$, we have $p^* \ge \pdistinct(n)$.
\end{lemma}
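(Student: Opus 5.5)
Fix an arbitrary measurable $f\colon[0,1]^3\to\{0,1\}$ and $n\ge 4$; it suffices to show $p(f)\ge \pdistinct(n)$. The approach is an averaging ("what if") argument, generalizing the $5$-cycle trick used for the bound $p^*\ge 1/5$. I sample $n$ i.i.d.\ values $X_0,\dots,X_{n-1}\sim\Unif[0,1]$. Almost surely they are pairwise distinct. I then use them to define a random $2$-coloring of $\DBdistinct(n)$ by
\[
    h_X(a,b,c) = f(X_a,X_b,X_c), \qquad (a,b,c)\in V_{\distinct}(n).
\]
For every realization, $h_X$ is a $2$-coloring of $\DBdistinct(n)$. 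By definition of $\pdistinct(n)$, the number of monochromatic edges is therefore at least $\pdistinct(n)\cdot|E_{\distinct}(n)|$ pointwise, and hence also in expectation.

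Next I compute the expected number of monochromatic edges exactly. By linearity of expectation, it equals the sum over edges $\bigl((a,b,c),(b,c,d)\bigr)\in E_{\distinct}(n)$ of
\[
    \Pr\bigl[f(X_a,X_b,X_c)=f(X_b,X_c,X_d)\bigr].
\]
Since $a,b,c,d$ are distinct indices, $(X_a,X_b,X_c,X_d)$ are four i.i.d.\ $\Unif[0,1]$ variables, so each term equals $p(f)$ by \eqref{eq:p(f)}. This is exactly where distinctness of all four labels on an edge is needed: the nodes may only see independent randomness, which is why $\DBdistinct$ rather than $\DBnormal$ appears here. The expected count is therefore $p(f)\cdot|E_{\distinct}(n)|$.

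Combining the two steps gives $p(f)\,|E_{\distinct}(n)| \ge \pdistinct(n)\,|E_{\distinct}(n)|$, and since $|E_{\distinct}(n)|=n(n-1)(n-2)(n-3)>0$ for $n\ge4$, I obtain $p(f)\ge\pdistinct(n)$. Taking the infimum over $f$ yields $p^*\ge\pdistinct(n)$.

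The only subtle point is measurability. I need each indicator $\mathbf 1[f(X_a,X_b,X_c)=f(X_b,X_c,X_d)]$ to be a random variable, and this holds because $f$ is measurable and coordinate projections are measurable. I also need the pointwise bound to hold for every realization. It does: $h_X$ is defined for all realizations, including the null set where some $X_i$ coincide, so no exceptional set needs care. I therefore expect no real obstacle; the key insight is simply the edge-by-edge independence.
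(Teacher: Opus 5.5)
Your proposal is correct and is essentially the paper's argument: the paper also colors $\DBdistinct(n)$ by $h(a,b,c)=f(X_a,X_b,X_c)$ for i.i.d.\ uniform $X_i$ and averages over edges. It phrases the averaging as picking four uniformly random distinct indices and conditioning on the $X_i$, which is equivalent to your linearity-of-expectation count over $E_{\distinct}(n)$.
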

\begin{proof}
    Fix any function $f$; we need to show that $p(f) \ge \pdistinct(n)$.
    Let us recall the definition of $p(f)$: we pick a random tuple $(A,B,C,D)$ and see if $f(A,B,C) = f(B,C,D)$. Let us do the same experiment in a slightly roundabout manner. First, sample $X_1,\dotsc,X_n$ independently from $\Unif[0,1]$. Then pick uniformly at random $4$ distinct indexes $A',B',C',D' \in [n]$. Now $X_{A'},X_{B'},X_{C'},X_{D'}$ is still a uniformly random sample, and we have $f(X_{A'},X_{B'},X_{C'}) = f(X_{B'},X_{C'},X_{D'})$ with probability $p(f)$.

    We now do the following thought experiment. After fixing the random values $X_1,\dotsc,X_n$, define a coloring $h\colon V_{\distinct} \to \{0,1\}\colon (a,b,c) \mapsto f(X_a, X_b, X_c)$. This is a coloring of $\DBdistinct$, and hence the fraction of monochromatic edges must be at least $\pdistinct(n)$. So a random edge $e = \bigl((A',B',C'),(B',C',D')\bigr)$ is monochromatic with probability at least $\pdistinct(n)$. But now $e$ is monochromatic iff $f(X_{A'},X_{B'},X_{C'}) = f(X_{B'},X_{C'},X_{D'})$.
    So conditioned on \emph{any} choice of $X_1,\dotsc,X_n$, with at least probability $\pdistinct(n)$ we must have $f(X_{A'},X_{B'},X_{C'}) = f(X_{B'},X_{C'},X_{D'})$. Therefore $p(f) \ge \pdistinct(n)$ and also $p^* \ge \pdistinct(n)$.
\end{proof}

\begin{lemma}
    We have $p^* = \lim_{n \to \infty} \pnormal(n) = \lim_{n \to \infty} \pdistinct(n)$.
\end{lemma}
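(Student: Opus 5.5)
By \cref{lem:upper,lem:lower}, for all $n\ge 4$ we already have $\pdistinct(n)\le p^*\le \pnormal(n)$. So it suffices to prove $\limsup_n \pnormal(n)\le p^*$ and $\liminf_n \pdistinct(n)\ge p^*$. I would obtain both by comparing the two graphs with each other and with an approximation of an arbitrary measurable $f$.

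\emph{Normal graphs converge to $p^*$ from above.} Fix $\varepsilon>0$ and pick $f$ with $p(f)<p^*+\varepsilon$. Since the indicator set $f^{-1}(1)\subseteq[0,1]^3$ is measurable, it can be approximated in Lebesgue measure, within $\delta$, by a finite union of dyadic grid cubes of side $1/n$, for all sufficiently large $n$. Here I would use regularity of Lebesgue measure together with Lebesgue differentiation, or the martingale convergence theorem applied to conditional expectations on the dyadic grid: rounding the conditional expectation at $1/2$ gives a grid function $f_n$ with $\|f-f_n\|_1\to 0$ along $n=2^k$. Then
\[
|p(f)-p(f_n)|\le \Pr[f(A,B,C)\ne f_n(A,B,C)]+\Pr[f(B,C,D)\ne f_n(B,C,D)]\le 2\|f-f_n\|_1 .
\]
Now $f_n$ is exactly a coloring $h$ of $\DBnormal(n)$ composed with the discretization $g$, and its monochromatic fraction equals $p(f_n)$, as in the proof of \cref{lem:upper}. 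Hence $\pnormal(n)\le p^*+\varepsilon+2\delta$ along $n=2^k$. To pass to all $n$, I would use monotonicity under refinement, $\pnormal(mn)\le\pnormal(n)$, which holds by blowing up a coloring; this gives the bound along multiples of $2^k$. For general $n$, I would instead approximate directly on the $1/n$ grid: by Lebesgue differentiation, the rounded conditional expectations on any grid of mesh tending to $0$ converge in $L^1$. I would state that step in this general form.

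\emph{Distinct graphs converge to $p^*$ from below.} Given an optimal coloring $h$ of $\DBdistinct(n)$, I would build an algorithm $f$ by assigning the points of $[0,1]$ to $n$ cells via $g$. The problem is that triples with repeated cells have no color in $h$. Colour those arbitrarily, say $0$. Then $p(f)\le \pdistinct(n)+\Pr[A,B,C,D \text{ not in distinct cells}]$, because conditioned on four distinct cells the edge $((g(A),g(B),g(C)),(g(B),g(C),g(D)))$ is a uniform edge of $\DBdistinct(n)$. The collision probability is at most $\binom42/n=6/n$. Hence $p^*\le\pdistinct(n)+6/n$, and therefore $\liminf\pdistinct(n)\ge p^*$.

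Combining the two parts gives $p^*\le\liminf\pdistinct\le\limsup\pdistinct\le p^*$ and $p^*\le\liminf\pnormal\le\limsup\pnormal\le p^*$, so both limits exist and equal $p^*$. The main obstacle is the approximation step in the first part: carefully justifying that an arbitrary measurable $f$ is $L^1$-approximated by $1/n$-grid step functions for every large $n$, not just along a subsequence. I would handle it by first approximating $f^{-1}(1)$ by a finite union of boxes $U$ (outer regularity). Then, for any grid, the union $f_n$ of grid cubes contained in $U$ differs from $U$ in measure only by cubes meeting $\partial U$, and there are $O(n^2)$ such cubes of volume $n^{-3}$, which gives an error of $O(1/n)$.
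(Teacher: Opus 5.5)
Your proof is correct, and your second part is essentially the paper's computation; the difference is in how you handle the normal side. The paper proves the single inequality $\pnormal(n) \le \pdistinct(n) + 7/n$. It extends an optimal coloring of $\DBdistinct(n)$ arbitrarily to $\DBnormal(n)$ and counts $|E_{\normal}\setminus E_{\distinct}| \le 7n^3$; together with \cref{lem:upper,lem:lower} this squeezes both sequences to $p^*$ at once. Your second part does the same calculation in probabilistic language (collision probability at most $6/n$ instead of an edge count), but you compare the result with $p^*$ rather than with $\pnormal(n)$.

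The algorithm $f$ you build there factors through $g$: it is $h'\circ g$, where $h'$ is the extension of $h$ to $V_{\normal}(n)$. By the proof of \cref{lem:upper}, $p(f)$ is then exactly the monochromatic fraction of $h'$ in $\DBnormal(n)$. So you already have $\pnormal(n) \le p(f) \le \pdistinct(n) + 6/n$, and your entire first part becomes unnecessary. That first part approximates a near-optimal measurable $f$ by $1/n$-grid step functions, using outer regularity and a boundary count, plus the dyadic, Lebesgue-differentiation and refinement-monotonicity detours.

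The first part is nevertheless valid. The finite box union $U$, together with the $O(K/n)$ count of grid cubes meeting $\partial U$, gives $\|f-f_n\|_1\to 0$ along all $n$. Combined with $|p(f)-p(f_n)|\le 2\|f-f_n\|_1$, this finishes the argument. It also establishes a more general fact: fine-grid discretization of any fixed algorithm loses nothing, which would be the right tool if no matching lower-bound sandwich were available.

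What it does not give is a rate for $\pnormal(n)$. Its error depends on the unknown near-optimal $f$ and on the complexity of its box approximation. The paper's purely combinatorial comparison, by contrast, yields the explicit bounds $0 \le \pnormal(n) - p^* \le 7/n$ and $0 \le p^* - \pdistinct(n) \le 7/n$ with no measure theory.
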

\begin{proof}
    We argue that $\lim_{n \to \infty} \bigl(\pnormal(n) - \pdistinct(n)\bigr) = 0$, and the claim follows by \cref{lem:upper,lem:lower} and the sandwich theorem.
    
    Clearly, $|V_{\normal}| = n^3$ and $|E_{\normal}| = n^4$. On the other hand, $V_{\normal} \setminus V_{\distinct}$ contains the vertices $(a,a,a)$, $(a,b,b)$, $(b, a, b)$ and $(b, b, a)$ for $a \neq b \in [n]$, which are fewer than $4n^2$ in total. There are two kinds of edges in $E_{\normal} \setminus E_{\distinct}$: those incident to $V_{\normal} \setminus V_{\distinct}$, and for each $(a, b, c) \in V_{\distinct}$ the three edges to $(b, c, a)$, $(b, c, b)$ and $(b, c, c)$. Overall $E_{\normal} \setminus E_{\distinct}$ contains at most $7n^3$ edges.
    Consider a coloring of $\DBdistinct(n)$ whose fraction of monochromatic edges is $\pdistinct(n)$ and extend it to a coloring of $\DBnormal(n)$ arbitrarily. It has $\pdistinct(n)|E_{\distinct}|$ monochromatic edges on $\DBdistinct(n)$ and all edges of $E_{\normal} \setminus E_{\distinct}$ could be monochromatic. By minimality of $\pnormal(n)$, we have
    \[
    \pnormal(n) \leq \frac{\pdistinct(n)|E_{\distinct}| + |E_{\normal} \setminus E_{\distinct}|}{|E_{\normal}|}
    \leq \pdistinct(n) + \frac{7}{n} \ . \qedhere
    \]
\end{proof}

\subsection{Lower bound}

Starting with small values of $n$, one can observe $\pdistinct(n) = 0.2$ for $5 \le n \le 8$ and $\pdistinct(n) > 0.2$ for $n \ge 9$; hence we could prove a \emph{nontrivial} lower bound by analyzing cuts in $\DBdistinct(9)$. However, for our main result we will use much larger values of $n$.
\begin{lemma}
We have $p^* \ge \pdistinct(10^6) \ge 0.23879$.
\end{lemma}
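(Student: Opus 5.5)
The first inequality $p^* \ge \pdistinct(10^6)$ is immediate from \cref{lem:lower}. The work is in certifying $\pdistinct(10^6) \ge 0.23879$. The graph $\DBdistinct(10^6)$ has about $10^{24}$ edges, so no direct computation is possible. The plan is to reduce, by symmetry and averaging, to a certificate on a small structure, generalizing the $5$-cycle packing argument of \cref{fig:db-distinct-5}.

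\textbf{Step 1: a fractional cycle-packing certificate.} Every odd closed walk in $\DBdistinct(n)$ contains at least one monochromatic edge under any $2$-coloring. More generally, for any finite subgraph $H$ we have $\mathrm{mc}(H)$, the minimum number of monochromatic edges of $H$ over all $2$-colorings, and every coloring of the whole graph induces at least $\mathrm{mc}(H)$ monochromatic edges inside every copy of $H$. Suppose we take a distribution over embedded copies of $H$ such that every edge of $\DBdistinct(n)$ is covered with the same expected multiplicity $\lambda$. Averaging then gives $\pdistinct(n) \ge \mathrm{mc}(H)/|E(H)|$, or more precisely the weighted analogue. The natural source of such copies uses only $k$ distinct labels: take $H=\DBdistinct(k)$ for a moderate $k$ and embed it via a uniformly random injection $[k]\to[n]$. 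By symmetry every edge of $\DBdistinct(n)$ is covered equally, so $\pdistinct(n)\ge\pdistinct(k)$ for all $n\ge k$. This monotonicity is the same thought experiment as in \cref{lem:lower}.

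\textbf{Step 2: why that alone is not enough, and the fix.} Monotonicity reduces the lemma to $\pdistinct(k)\ge 0.23879$ for some small $k$. But $\pdistinct(k)$ converges slowly, and computing the max cut exactly is infeasible beyond small $k$. So I would instead use an LP/SDP-style dual certificate that exploits the symmetry group $S_n$ acting on labels. A $2$-coloring of $\DBdistinct(n)$ restricted to the tuples on a random $k$-subset gives a random coloring of $\DBdistinct(k)$. I would write down necessary consistency constraints on the resulting distribution over ``local patterns'' (colorings of small configurations, say all tuples over $m$ labels for $m\approx 6$ or $7$). These constraints are flag-algebra style: marginals from $m$-patterns to $(m-1)$-patterns must be consistent and exchangeable, and positive-semidefiniteness conditions on flag products hold up to $O(1/n)$ errors, which are negligible at $n=10^6$. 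This is why the statement uses a large but finite $n$ rather than the limit. Minimizing the monochromatic edge density subject to these constraints is a finite LP/SDP, and its optimal value is a lower bound.

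\textbf{Step 3: rigorous verification.} From a numerical solver I would extract a dual solution: nonnegative multipliers for the odd-cycle and pattern inequalities, plus PSD matrices. I would round it to exact rationals so that the inequality ``monochromatic density $\ge 0.23879 - O(1/n)$'' holds identically over every admissible pattern distribution, and then check this in exact arithmetic, ideally in Lean as the paper describes. Finally I would absorb the $O(1/n)$ error terms coming from non-distinct samples and finite-$n$ exchangeability using $n=10^6$.

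\textbf{Main obstacle.} The hard step is finding a certificate strong enough to reach $0.23879$, very close to the upper bound $0.24118$, while keeping the pattern size $m$ small enough for the LP/SDP to be solvable and exactly verifiable. I would first try pure LP constraints over $m=6$ patterns with odd-cycle inequalities. If the bound stalls near $0.2$ to $0.23$, I would add the PSD flag constraints and increase $m$ until the threshold is reached.
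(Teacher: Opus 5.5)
You correctly derive the first inequality from \cref{lem:lower}. Your monotonicity $\pdistinct(n)\ge\pdistinct(k)$, via a random injection $[k]\to[n]$, is also correct, and so is the overall shape of the plan (symmetry-reduced convex relaxation, numerical dual, exact rounding). The gap is that the relaxation you actually propose cannot be set up, so Steps 2--3 never produce a certificate. An $m$-pattern in your sense is a $2$-coloring of all $m(m-1)(m-2)$ ordered triples of distinct labels. Even up to relabeling there are roughly $2^{60}/5!$ of them for $m=5$, $2^{120}/6!$ for $m=6$ and $2^{210}/7!$ for $m=7$, and no LP or SDP over distributions on these can be written down. Your planned first stage is also vacuous. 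The variables are distributions over genuine colorings, so odd-cycle inequalities and marginal consistency hold automatically. The pure LP at level $m$ therefore returns exactly the minimum over single patterns, which is $\pdistinct(m)$, and this equals $0.2$ for $5\le m\le 8$. The only level where the pattern space is even enumerable is $m=4$. There each pattern is a coloring of $\DBdistinct(4)$, a disjoint union of six directed $4$-cycles that admits a coloring with no monochromatic edge. So the whole bound would have to come from PSD constraints on $3$-label flags, and you give no reason to think that reaches $0.23879$. ``Increase $m$ until the threshold is reached'' is therefore not an executable step.

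The missing idea, which is what the paper uses, is to relax at the level of the graph rather than at the level of full colorings of label sets. Take the Goemans--Williamson Gram-matrix relaxation of max-cut on $\DBdistinct(n)$ itself, with one unit vector per vertex $(a,b,c)$, and strengthen it with triangle inequalities, which involve only $3$-vertex marginals. Averaging over the action of $S_n$ on labels, you may assume the solution is invariant, so it is described by one correlation per $S_n$-orbit of pairs of vertices. Such an orbit is just the coincidence pattern between two ordered triples of distinct labels, a partial matching of positions, and there are $34$ of them. The PSD constraint then splits into blocks of size at most $3$, one for each irreducible representation of $S_n$ on ordered triples. The triangle inequalities fall into finitely many orbit classes, and all coefficients depend explicitly on $n$. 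The result is an SDP of constant size for every $n$, so you can work directly on the finite graph at $n=10^6$. You need no $O(1/n)$ flag-algebra error terms and no monotonicity step, and a rounded dual solution certifies $\pdistinct(10^6)\ge 0.23879$.
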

\begin{proof}[Proof idea]
We use a semidefinite-program (SDP) relaxation \cite{vandenberghe-boyd-1996-semidefinite-programming} of the max-cut problem \cite{goemans-williamson-1995-improved-approximation}, with correlation triangle inequalities \cite{deza-laurent-1997-geometry-of-cuts-and-metrics}. Graph $\DBdistinct(n)$ is highly symmetric, and by exploiting symmetries \cite{gatermann-parrilo-2004-symmetry-groups-semidefinite}, the size of the SDP is bounded by a constant independent of $n$. We use an SDP solver to find a feasible dual solution \cite{peyrl-parrilo-2008-computing-sum-of-squares}, which can be turned into a certificate for $\pdistinct(10^6) \ge 0.23879$. We refer to the formalized proof \cite{2-coloring-1-round-lean-nonanon} for more details.
\end{proof}

\subsection{Upper bound}

We can check with exhaustive enumeration that $\pnormal(2) = 0.25$ and $\pnormal(4) \approx 0.2422$; hence cuts in $\DBnormal(4)$ would already give a significant improvement over prior work. We could prove the upper bound by constructing $\DBnormal(n)$ for a moderately large value of $n$ and using some heuristic approach to find a large cut. However, this would result in an algorithm that does not have any clear structure or short description.

We steered heuristic search towards ``simple'' colorings of $\DBnormal(n)$; here a particularly useful restriction was to require that the resulting function $f(a,b,c)$ is monotone in each coordinate $a,b,c$. Then we visualized the function, and made some educated guesses on the general shape.

The first milestone was the discovery of an algorithm family $f^\tau$ that has a very simple piecewise-linear structure, illustrated in \cref{fig:one-param}. Here all lines that define the surface cross at the critical point $(\tau,\tau,\tau)$, where $0 < \tau < 1$. We can derive a closed-form expression of $p(f^\tau)$ as a function of $\tau$, and optimize $\tau$. For the best possible $\tau$ this results in $p(f^\tau) < 0.24149$.

\begin{figure}
    \begin{subfigure}[t]{.47\linewidth}\centering
        \vspace{0pt}
        \includegraphics[width=\linewidth]{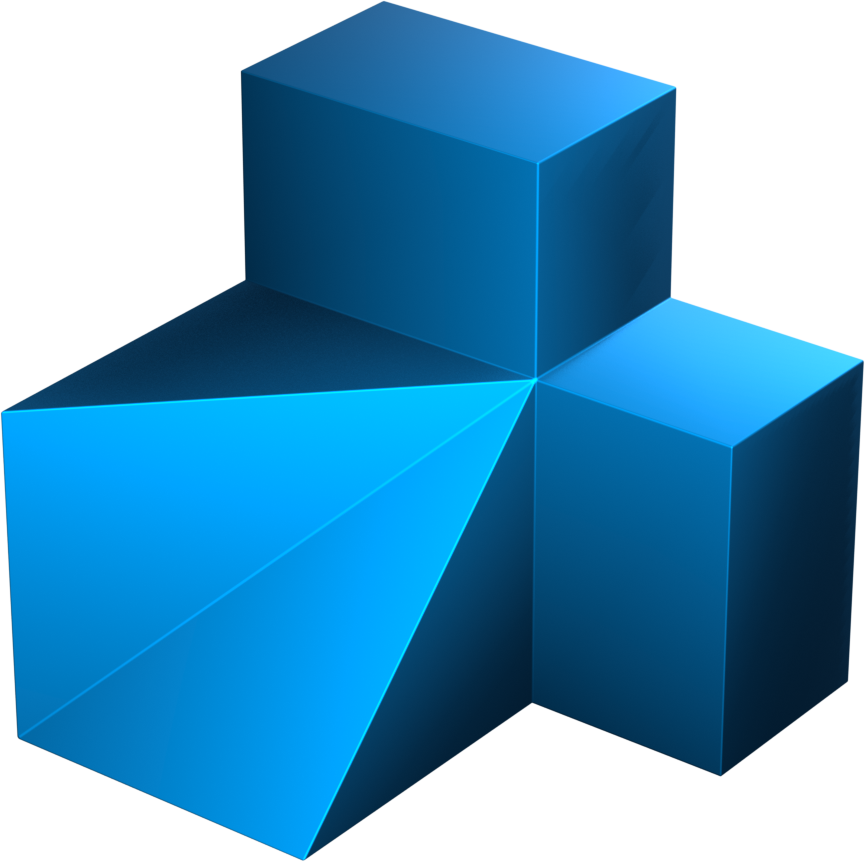}
        \caption{The algorithm $f^\tau$ that achieves $p(f^\tau) < 0.24149$. The blue area shows points in the $x,y,z$ space with $f(x,y,z) = 1$. All key lines that define the surface meet at the critical point $x=y=z=\tau$, and here we have optimized $\tau$ to minimize $p(f^\tau)$.}\label{fig:one-param}
    \end{subfigure}
    \hfill
    \begin{subfigure}[t]{.47\linewidth}
        \centering
        \vspace{0pt}
        \includegraphics[width=\linewidth]{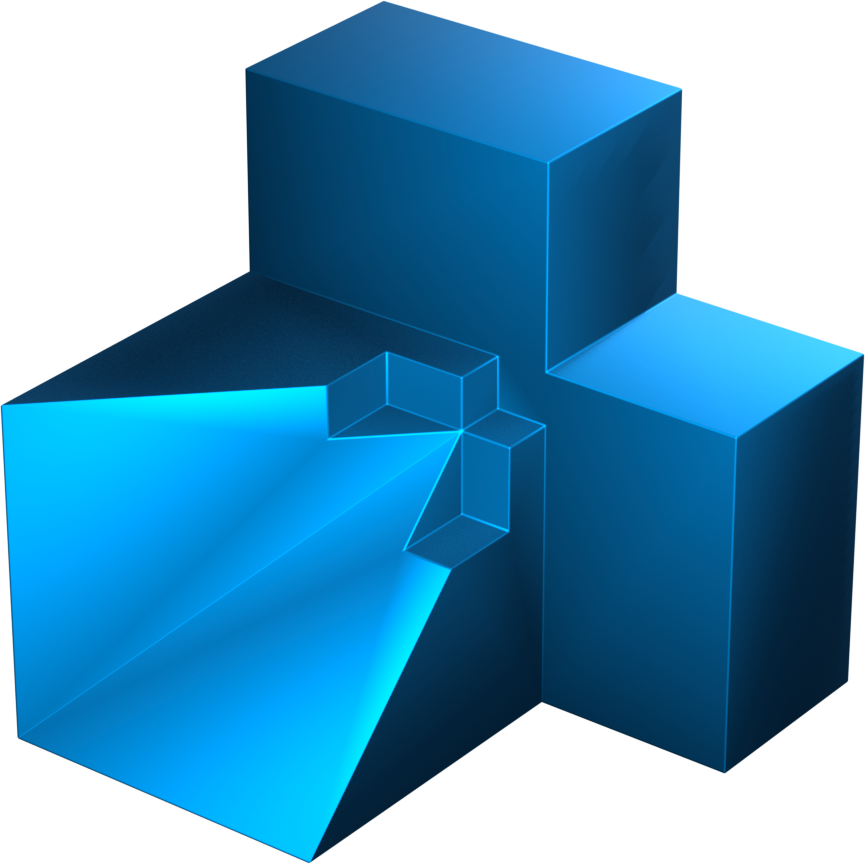}
        \caption{The three-parameter algorithm $f$ that achieves $p(f) < 0.24118$.}\label{fig:upper-bound}
    \end{subfigure}
    \caption{Illustrations of algorithms.}
\end{figure}

We then further tuned $f^\tau$ and discovered that modifications shown in \cref{fig:upper-bound} lead to a family of algorithms defined with three parameters that can slightly outperform $f^\tau$. Optimizing these parameters results in an algorithm $f$ with $p(f) < 0.24118$; again, see \cite{2-coloring-1-round-lean-nonanon} for the details.

\section*{Acknowledgments}

This work was supported in part by the Research Council of Finland, Grant 359104, and by the Quantum Doctoral Education Pilot, the Ministry of Education and Culture, decision VN/3137/2024-OKM-4.

\DeclareUrlCommand{\path}{}
\urlstyle{sf}
\bibliographystyle{alphaurl}
\bibliography{da.bib,extra.bib}
    
\end{document}

%% file: preamble.tex
\usepackage{amsthm}
\usepackage{graphicx}
\usepackage{amsmath, amssymb, amsfonts, verbatim}
\usepackage{hyphenat, subcaption, multirow}
\usepackage{nicefrac}
\usepackage{paralist}
\usepackage{microtype}
\usepackage{mathtools, etoolbox, enumitem}
\usepackage{thmtools}
\usepackage{thm-restate}
\usepackage{xparse}

\usepackage[margin=1in]{geometry}

\setitemize{noitemsep,topsep=0pt}
\setenumerate{noitemsep,topsep=0pt}

\usepackage{xcolor}
\definecolor{myblue}{HTML}{0088cc}
\usepackage[linktocpage=true,
	colorlinks,
	linkcolor=black,citecolor=black,urlcolor=myblue,
    bookmarks,bookmarksopen,bookmarksnumbered]
	{hyperref}
\usepackage[capitalise]{cleveref}

\newcommand{\fakeparagraph}[1]{\medskip\noindent\textbf{#1}}

\newtheorem{theorem}{Theorem}
\newtheorem{lemma}{Lemma}[section]
\newtheorem*{lemma*}{Lemma}

\newtheorem*{proposition*}{Proposition}

\newtheorem*{claim*}{Claim}

\theoremstyle{definition}

\newtheorem*{problem*}{Problem}

\theoremstyle{remark}

\newtheorem*{remark*}{Remark}

\crefname{lemma}{Lemma}{Lemmas}
\crefname{claim}{Claim}{Claims}
\crefname{enumi}{Step}{Steps}
\crefname{step}{Step}{Step}

\renewcommand{\leq}{\leqslant}

\renewcommand{\le}{\leqslant}
\renewcommand{\ge}{\geqslant}